\documentclass[conference]{IEEEtran}
\IEEEoverridecommandlockouts
\usepackage{balance}
\usepackage{cite}
\usepackage{amsmath,amssymb,amsfonts}
\usepackage{fancyhdr}
\fancypagestyle{cfooter}{ %
\fancyhf{} 
\cfoot{\scriptsize{© 20XX IEEE. Personal use of this material is permitted. Permission from IEEE must be obtained for all other uses, in any current or future media, including reprinting/republishing this material for advertising or promotional purposes, creating new collective works, for resale or redistribution to servers or lists, or reuse of any copyrighted component of this work in other works.}}

}

\usepackage{array}
\usepackage{stfloats}
\usepackage[caption=false,font=footnotesize,position=bottom]{subfig}
\usepackage{url}
\usepackage{graphicx}
\usepackage{textcomp}
\usepackage{xcolor}
\usepackage{tikz}
\usepackage{tikz-3dplot}
\usepackage{ellipsis}
\usetikzlibrary{calc}
\usetikzlibrary{decorations.pathreplacing,decorations.markings,shapes.geometric}
\usetikzlibrary{calc,patterns,angles,quotes,shapes,arrows.meta}
\usetikzlibrary{chains,spy}
\usetikzlibrary{external}
\tikzset{>=Stealth}
\usepackage{cite}
\usepackage{amsmath,amssymb,amsfonts}
\usepackage[capitalise]{cleveref}
\usepackage{amsthm}

\usepackage{algorithm}
\usepackage{algpseudocode}
\usepackage{siunitx}

\usepackage[utf8]{inputenc}

\usepackage{mathtools}
\usepackage{bm}
\usepackage{etoolbox}
\usepackage{scalerel}

\usepackage{pgfplots}
\pgfplotsset{width=5cm,compat=newest}
\usepackage{pgfplotstable}
\usetikzlibrary{fit,backgrounds,fadings}
\pgfdeclarelayer{bg1}
\pgfdeclarelayer{bg2}
\pgfsetlayers{bg1,bg2,main}
\usepackage{filecontents}
\usepackage[font=scriptsize, justification=centering]{caption}
\usepackage{subcaption}
\usepackage{bm}
\usepackage{enumerate}
\usepackage{tabularx}
\usepackage[dvipsnames]{xcolor}
\usepackage{multicol}

\usepackage{fancyhdr}

\newcommand{\vh}{{\bm{h}}}

\newcommand{\vn}{{\bm{n}}}

\newcommand{\vr}{{\bm{r}}}

\newcommand{\vx}{{\bm{x}}}
\newcommand{\vy}{{\bm{y}}}
\newcommand{\vz}{{\bm{z}}}

\newcommand{\mc}{{\bm{C}}}

\newcommand{\mh}{{\bm{H}}}
\newcommand{\mati}{{\bm{I}}}

\newcommand{\mdelta}{{\bm{\Delta}}}
\newcommand{\vgam}{{\bm{\gamma}}}
\newcommand{\veta}{{\bm{\eta}}}

\newcommand{\vmu}{\bm{\mu}}

\newcommand{\vphi}{{\bm{\phi}}}
\newcommand{\mphi}{{\bm{\Phi}}}

\newcommand{\mpsi}{\bm{\Psi}}

\newcommand{\diag}{\mathrm{diag}}

\newcommand{\He}{\mathrm{H}}
\newcommand{\T}{\mathrm{T}}

\renewcommand{\mid}{\, | \,}

\newcommand{\CME}{\mathrm{CME}}
\newcommand{\data}{\mathrm{data}}
\newcommand{\dict}{\mathrm{dict}}
\newcommand{\Frob}{\mathrm{F}}
\newcommand{\GMM}{\mathrm{GMM}}
\newcommand{\NMSE}{\mathrm{NMSE}}
\newcommand{\OMP}{\mathrm{OMP}}
\newcommand{\pilot}{\mathrm{pilot}}
\newcommand{\plOTFS}{\mathrm{pl},\mathrm{OTFS}}
\newcommand{\sparse}{\mathrm{sp}}
\newcommand{\test}{\mathrm{test}}
\newcommand{\train}{\mathrm{train}}

\usepackage{amsthm}
\newtheorem{theorem}{Theorem}

\tikzstyle{block} = [draw, rectangle, 
minimum height=4em, minimum width=4em]
\tikzstyle{input} = [coordinate]
\tikzstyle{output} = [coordinate]
\tikzstyle{pinstyle} = [pin edge={to-,thin,black}]

\usetikzlibrary{positioning}

\tikzset{radiation/.style={{decorate,decoration={expanding waves,angle=90,segment length=5pt}}}}

\usetikzlibrary{spy}
\usepackage{pgfplots}
\usepackage{wrapfig}
\usetikzlibrary{arrows,shapes}
\usetikzlibrary{positioning,shapes.callouts}
\usepgfplotslibrary{groupplots,dateplot}
\usetikzlibrary{patterns,shapes.arrows}
\pgfplotsset{compat=newest}

\def\BibTeX{{\rm B\kern-.05em{\sc i\kern-.025em b}\kern-.08em
		T\kern-.1667em\lower.7ex\hbox{E}\kern-.125emX}}

\usepackage[acronym,shortcuts]{glossaries}
\newacronym{2D}{2D}{two-dimensional}
\newacronym{5G}{5G}{fifth generation}
\newacronym{6G}{6G}{sixth generation}
\newacronym{AWGN}{AWGN}{additive white Gaussian noise}
\newacronym{BS}{BS}{base station}
\newacronym{CGLM}{CGLM}{conditional Gaussian latent models}
\newacronym{CME}{CME}{conditional mean estimator}
\newacronym{CS}{CS}{compressive sensing}
\newacronym{CSGMM}{CSGMM}{compressive sensing Gaussian mixture model}
\newacronym{CSI}{CSI}{channel state information}
\newacronym{CSVAE}{CSVAE}{compressive sensing variational auto-encoder}
\newacronym{DD}{DD}{Doppler-delay}
\newacronym{ELBO}{ELBO}{evidence lower bound}
\newacronym{EM}{EM}{expectation-maximization}
\newacronym{eMBB}{eMBB}{enhanced mobile broadband}
\newacronym{GM}{GM}{generative model}
\newacronym{GMM}{GMM}{Gaussian mixture model}
\newacronym{ICI}{ICI}{inter-carrier interference}
\newacronym{IoT}{IoT}{internet of things}
\newacronym{ISFFT}{ISFFT}{inverse symplectic finite Fourier transform}
\newacronym{ISI}{ISI}{inter-symbol interference}
\newacronym{LMMSE}{LMMSE}{linear minimum mean square error}
\newacronym{LoS}{LoS}{line-of-sight}
\newacronym{MIMO}{MIMO}{multiple-input multiple-output}
\newacronym{ML}{ML}{machine learning}
\newacronym{mmWave}{mmWave}{millimeter-wave}
\newacronym{M-SBL}{M-SBL}{multiple sparse Bayesian learning}
\newacronym{MSE}{MSE}{mean squared error}
\newacronym{NMSE}{NMSE}{normalized mean squared error}
\newacronym{OFDM}{OFDM}{orthogonal frequency-division multiplexing}
\newacronym{OMP}{OMP}{orthogonal matching pursuit}
\newacronym{OTFS}{OTFS}{orthogonal time-frequency space}
\newacronym{PDF}{PDF}{probability density function}
\newacronym{SBGM}{SBGM}{sparse Bayesian generative model}
\newacronym{SBL}{SBL}{sparse Bayesian learning}
\newacronym{SFFT}{SFFT}{symplectic finite Fourier transform}
\newacronym{SNR}{SNR}{signal-to-noise ratio}
\newacronym{TF}{TF}{time-frequency}
\newacronym{UE}{UE}{user entity}
\newacronym{ULA}{ULA}{uniform linear array}
\newacronym{VAE}{VAE}{variational auto-encoder}
\newacronym{WSSUS}{WSSUS}{wide-sense-stationary-uncorrelated-scattering}

\tikzset{PlotDFT/.style={mark=star,mark size=2.2pt, line width=1pt, color=gray, dashed, mark options=solid}}
\tikzset{PlotCVAE-single/.style={mark=triangle,mark size=1.5pt, line width=1pt, color=green!40!black, dashed, mark options={solid, rotate=180}}}
\tikzset{PlotGeo/.style={mark=triangle,mark size=1.5pt, line width=1pt, color=black, dashed, mark options={solid, rotate=0}}}
\tikzset{PlotCVAE-no-cond/.style={mark=diamond,mark size=1.8pt, line width=1pt, dashed, color=black!50!green, mark options=solid}}
\tikzset{PlotCVAE-genie/.style={mark=x, mark size=2.2pt, domain=1:10000, line width=1pt, color=red, dashed, mark options=solid}}
\tikzset{PlotCVAE-cond/.style={mark=square,mark size=1.5pt, line width=1pt, color=blue, mark options=solid}}
\tikzset{PlotCGMM-S/.style={mark=o, mark size=1.5pt, line width=1pt, color=orange , dashed, mark options=solid}}

\usepackage{circuitikz}

\begin{document}

\title{
OTFS Channel Estimation Utilizing Sparse Bayesian Generative Modelling
\thanks{This work is funded by the Bavarian Ministry of Economic Affairs, Regional Development, and Energy within the project 6G Future Lab Bavaria.}
}
\newcommand{\jbig}{$\mathcal{J}$}
\author{\IEEEauthorblockN{Louis Anseaume, Benedikt Böck, Franz Weißer, Wolfgang Utschick\\}
\IEEEauthorblockA{\textit{TUM School of Computation, Information and Technology, Technical University of Munich, Germany} \\
\{louis.anseaume,benedikt.boeck,franz.weisser,utschick\}@tum.de}
}

\maketitle
\thispagestyle{cfooter}


\begin{abstract}
    One of the key challenges of future wireless communication systems is ensuring reliability in high-speed mobile scenarios, where accurate recovery of \ac{CSI} is essential. Many recent studies have concluded that \ac{OTFS} modulation is a promising technology for addressing this challenge. Additionally, \ac{ML}-based methods have the potential to improve channel estimation performance by leveraging ambient information more effectively than classical estimation techniques. This paper particularly addresses channel estimation for \ac{OTFS} by employing a \ac{CS}-based \ac{SBGM}, namely the recently introduced \ac{CSGMM}. We show that our proposed approach yields significant improvement in \ac{NMSE} over the next-best-performing baseline. We additionally provide insights into the theoretical potential of the model to optimally approximate complex channel distributions with arbitrary precision within the \ac{DD} domain. To summarize, this work establishes the \ac{OTFS}-\ac{CSGMM} framework as a promising solution for high-mobility wireless channel estimation.
\end{abstract}

\begin{IEEEkeywords}
    OTFS, Doppler-delay domain, compressive sensing, sparse Bayesian generative modelling, channel estimation
\end{IEEEkeywords}

\glsresetall 
\section{Introduction}

Current state-of-the-art wireless communication systems, such as \ac{5G} systems, are based on \ac{OFDM} due to its high spectral efficiency, low computational complexity, and robustness to frequency-selective fading \cite{Stuber:2004}. However, next-generation systems, such as \ac{6G} systems, aim to expand the network to support high-mobility scenarios, such as vehicle-to-vehicle or high-speed train communication, with speeds reaching between 300 and 500 km/h in typical applications \cite{Yuan:2023}. In these contexts, research has shown that \ac{OFDM} no longer remains a reliable technology due to high \ac{ICI}, leading to poor signal reconstruction \cite{Wang:2006}. Many recent works have thus declared that \ac{OTFS} is a promising candidate technology for these scenarios involving large Doppler effects \cite{Hadani:2017}. It operates in the \ac{DD} domain, a \ac{2D} space spanning Doppler shifts and delays, instead of passing the information symbols over the usual \ac{TF} domain as in \ac{OFDM}. Accurate \ac{CSI} is crucial for \ac{OTFS} to achieve correct, reliable, and fast communication, and therefore requires highly accurate channel estimation methods.

One important aspect that can be utilized to obtain \ac{CSI} in \ac{OTFS} modulation is the channel's inherent sparsity in typical outdoor high-mobility scenarios, due to the presence of a small number of reflectors \cite{Wang:2016}. However, classical estimation methods for sparse channel representations may reach their limits in these complex environments. Therefore, \ac{ML}-based methods are promising candidates for improving channel estimation in these scenarios, as they can better leverage contextual information inherent to the wireless communication system. In particular, \cite{Boeck:2024} introduces the \ac{CS}-based \ac{SBGM}, which can be used to learn the channel distribution within a base station cell in typical high-speed mobile communication setups.

The goal of this work is to address the problem of \ac{OTFS} channel estimation by leveraging the sparsity of typical outdoors wireless communication environments, as well as the inherent adaptability of \ac{ML} techniques to capture the complexity of these scenarios. To this end, we reformulate the results from \cite{Gaudio:2022} in order to match the description of \ac{CS}. Afterwards, we utilize the \ac{CSGMM} from \cite{Boeck:2024}, a representative of \acp{SBGM}, and apply the adapted algorithm to the problem of \ac{OTFS} channel estimation. Through simulations, we show that our proposed approach outperforms state-of-the-art baseline methods. We additionally provide a theoretical study of the functionalities of \acp{SBGM}, showing that, under certain assumptions on the channels, the studied models can optimally learn the characteristics of the underlying channel distributions, or arbitrarily fine approximations thereof. This will allow the estimation methods to yield optimal channel reconstructions.
\section{OTFS System Model}

We consider time-frequency-dependent wireless channels, consisting of $P$ multipath components. The channel impulse response, within the \ac{TF} domain, is then expressed as
\begin{equation}
H(t,f) = \sum_{p=0}^{P-1} h_p e^{j2\pi \nu_p t} e^{-j2\pi \tau_p f},
\label{eq:TF-channel-model}
\end{equation}
where $\{h_p\}_{p=0}^{P-1}$ are the channel gains, $\{\nu_p\}_{p=0}^{P-1}$ the Doppler shifts and $\{\tau_p\}_{p=0}^{P-1}$ the delays of the paths. Furthermore, we denote with $T$ the symbol duration within the \ac{TF} domain and $\Delta f$ the subcarrier frequency spacing. We assume that the channel parameters follow~\cite{Gaudio:2022}
\begin{equation}
\tau_{\max} < T, \hspace{0.5cm} \nu_{\max} < \Delta f, \hspace{0.5cm} T = 1/\Delta f,
\end{equation}
where $\tau_{\max}$ and $\nu_{\max}$ represent the (absolute) maximal values that the parameters $\tau_p$ and $\nu_p$ can take for $p \in \left\{ 0,...,P-1 \right\}$, respectively.

A set of complex data symbols $\left\{x_{k^\prime,l^\prime}\right\}$ for $k^\prime \in \left\{ 0,...,N-1 \right\}$ and $l^\prime \in \left\{ 0,...,M-1 \right\}$, with $N,M \in \mathbb{N}$, is placed in a \ac{2D} \ac{DD} grid 
\begin{equation}
\mathcal{G}_{\data} = \left\{ \left(\frac{k^\prime}{NT},\frac{l^\prime}{M\Delta f} \right)\right\}_{k^\prime,l^\prime = 0}^{N-1,M-1},
\label{eq:DD-modulation-grid}
\end{equation}
which we call the modulation grid. The relation between the \ac{DD} domain and the \ac{TF} domain is then given by the \ac{ISFFT}, which applied to $\left\{x_{k^\prime,l^\prime}\right\}$ yields a block of data symbols $\left\{X[n^\prime,m^\prime]\right\}$ in the \ac{TF} domain. By applying a discrete Heisenberg transform on the \ac{TF} symbols, we obtain a transmission signal in the time domain, which is transmitted over the time-frequency-selective channel defined in \eqref{eq:TF-channel-model}. At the receiver, we apply a Wigner transform on the received signal to obtain the output signal $y(t,f)$ in the \ac{TF} domain. After sampling the output signal to obtain a block of symbols $\left\{Y[n,m]\right\}$ in the \ac{TF} domain, we apply the \ac{SFFT}, yielding output symbols $\left\{y_{k,l}\right\}$ in the \ac{DD} domain.

Following the described modulation scheme, the vectorized input-output relationship for \ac{OTFS} is then \cite{Gaudio:2022}
\begin{equation}
\vy = \left( \sum_{p=0}^{P-1}h_p \mpsi_{p} \right) \vx + \vn = \mh \vx + \vn,
\label{eq:OTFS-input-output}
\end{equation}
where $\mh$ is the channel matrix, the entries of $\mpsi_p$ are the \ac{ISI} coefficients $\Psi_{k,k^\prime}^p [l,l^\prime]$ of the $p$-th path, and $\vn\sim\mathcal{N}_\mathbb{C}(\bm{0},\sigma^2 \mati_{MN})$ is \ac{AWGN}. By considering rectangular shaping pulses for both transmission and reception, for any path $p \in \left\{ 0,...,P-1 \right\}$, a simplified expression of $\Psi_{k,k^\prime}^p[l,l^\prime]$ is given by \eqref{eq:ISI-mat-element}, shown at the bottom of this page. 
The details of the derivation can be found in \cite{Gaudio:2020}. 
\begin{figure*}[b]
\hrule
\begin{equation}
\begin{split}
    \Psi_{k,k^\prime}^p[l,l^\prime] \approx \frac{e^{j2\pi \nu_p \tau_p}}{NM}\frac{ 1-e^{j2\pi (k^\prime - k + \nu_pNT)} }{ 1-e^{j2\pi \frac{(k^\prime - k + \nu_pNT)}{N}} }\frac{ 1-e^{j2\pi (l^\prime - l + \tau_pM\Delta f)} }{ 1-e^{j2\pi \frac{(l^\prime - l + \tau_pM\Delta f)}{M}} }e^{-j2\pi\nu_p\frac{l^\prime}{M\Delta f}} \\
    \times \begin{cases}
        1 & \text{if } l^\prime \in \left\{ 0,...,M-1-\left\lceil \frac{M\tau_p}{T} \right\rceil \right\} \\
        e^{-j2\pi\left( \frac{k^\prime}{N} + \nu_pT \right)} & \text{if } l^\prime \in \left\{ M-\left\lceil \frac{M\tau_p}{T} \right\rceil ,...,M-1 \right\}.
    \end{cases}
\end{split}
\label{eq:ISI-mat-element}
\end{equation}
\end{figure*}
The channel has a block-wise effect on the \ac{OTFS} modulation scheme instead of inducing symbol-per-symbol transformations \cite{Gaudio:2022}.
The transmitted symbols experience a shift in the \ac{DD} grid, corresponding to the channel parameters, with some additional energy spread induced by their fractional parts.
This leads us to consider the pilot scheme from \cite{Gaudio:2022}, which we summarize in the following.

Here, a block of $N \times M$ symbols contains a central, high-energy pilot symbol, called a peak pilot symbol, surrounded by a zero-energy guard interval of pilot symbols. The remaining space in the block can be reserved for low-energy data symbols. This ensures that the peak pilot symbol is isolated from the data symbols, both spatially within the \ac{DD} domain and in power.
Therefore, it can serve as an indicator of shifts in the \ac{DD} grid as interference with other symbols is negligible. As such, we can set the input vector of such a pilot scheme as
\begin{equation}
\vx_{\plOTFS}=[0,...,0,x_{\pilot},0,...,0]^{\T} \in \mathbb{C}^{NM}
\end{equation}
since data symbols have negligible power levels.
\section{OTFS Channel Estimation}

We consider a class of models that incorporates both the sparsity-inducing property of the \ac{SBL} framework \cite{Wipf:2004} as well as the exploitation of a latent space through a trainable \ac{GM}. In this work, we consider a specific class of \ac{GM}s called \ac{CGLM}, where the distributions conditioned on the latent variable are Gaussians. In that case, the combined models are called \ac{SBGM}s \cite{Boeck:2024}.

In particular, this paper focuses on the application of the \ac{GMM}, which approximates an arbitrary \ac{PDF} $p(\vx)$ by
\begin{equation}
p^{(\GMM)}(\vx) = \sum_{k=1}^K p(k)p(\vx\mid k) = \sum_{k=1}^K \rho_k\mathcal{N}(\vx; \vmu_k, \mc_k)
\label{eq:GMM}
\end{equation}
with $p(k)=\rho_k \geq 0$ for all $k \in \left\{ 1,...,K \right\}$ and $\sum_{k=1}^K\rho_k = 1$. \ac{GMM}s are typically thought to be good approximations of complex probabilistic distributions and can be employed for a variety of wireless communication tasks \cite{Koller:2022}. Incorporating them into the \ac{SBGM} framework yields the \ac{CSGMM} \cite{Boeck:2024}.

To estimate the channel parameters $\xi = \{ h_p,\nu_p,\tau_p \}_{p=0}^{P-1}$ utilizing the framework of the \ac{CSGMM}, we reformulate the \ac{OTFS} input-output relationship. To this end, we introduce the vectors $\vphi_p$ as 
\begin{equation}
\vphi_p = \mpsi_p \vx.
\label{eq:phi_vector}
\end{equation}
With the matrix $\mphi = \left[ \vphi_0,...,\vphi_{P-1} \right] \in \mathbb{C}^{MN \times P}$ and the vector $\vh = \left[ h_0,...,h_{P-1} \right]^{\T} \in \mathbb{C}^P$, the input-output relationship in~\eqref{eq:OTFS-input-output} becomes
\begin{equation}
\vy = \mphi\vh + \vn.
\label{eq:OTFS-input-output-mat}
\end{equation}
Let us further define a \ac{DD} grid of cardinality $G \geq MN$, which we call the sparse grid, as
\begin{equation}
\mathcal{G} = \left\{ \left( \tilde{\nu}_i, \tilde{\tau}_i \right) \right\}_{i=1}^{G}.
\label{eq:DD-sparse-grid}
\end{equation}
Under the assumption that the channel coefficients lie on the sparse grid, the input-output relationship in~\eqref{eq:OTFS-input-output-mat} can be reformulated as
\begin{equation}
\vy = \mphi_{\dict}\vh_{\sparse} + \vn,
\label{eq:OTFS-input-output-sparse}
\end{equation} 
with the sparse vector $\vh_{\sparse} \in \mathbb{C}^G$, which has only $P$ non-zero entries, and the sparse dictionary $\mphi_{\dict}$, which is given as
\begin{equation}
\mphi_{\dict} = \left[ \tilde{\vphi}_0,...,\tilde{\vphi}_{G-1} \right] \in \mathbb{C}^{MN \times G},
\label{eq:DD-sparse-dict}
\end{equation}
with its columns as $\tilde{\vphi}_i = \mpsi\left( \tilde{\nu}_i, \tilde{\tau}_i \right) \vx$, cf. \eqref{eq:phi_vector}.
In the general case, we still use \eqref{eq:OTFS-input-output-sparse} under the assumption that for any parameters $\xi = \{ h_p,\nu_p,\tau_p \}_{p=0}^{P-1}$ the approximation $\mphi\vh \approx \mphi_{\dict}\vh_{\sparse}$ holds. 
Equation \eqref{eq:OTFS-input-output-sparse} now resembles the typical formulation of a CS ill-posed linear inverse problem \cite{Boeck:2024}. This allows us to express the \ac{OTFS}-\ac{CSGMM} as follows,
\begin{equation}
\begin{cases}
\vy\mid\vh_{\sparse} \sim p(\vy\mid\vh_{\sparse}) = \mathcal{N}(\vy; \mphi_{\dict}\vh_{\sparse}, \sigma^2 \mati) \\
\vh_{\sparse}\mid k \sim p(\vh_{\sparse}\mid k) = \mathcal{N}(\vh_{\sparse}; \bm{0}, \text{diag}(\vgam_k)) \\
\forall k \in \left\{ 1,...,K \right\}, p_\delta(k)=\rho_k \geq 0.
\end{cases}
\label{eq:OTFS-CSGMM}
\end{equation}
We use an \ac{EM} algorithm to train the \ac{OTFS}-\ac{CSGMM}. In the so-called E-step, we estimate the statistical characteristics of the posteriors of the sparse and latent variables $\{\vh_{\sparse}\mid(\vy,k); k\mid\vy \}$, which allows the closed-form computation of $p_{(t)}(\vh_{\sparse}, k\mid\vy)$ at the $t$-th iteration. In the so-called M-step, we update the parameters of the \ac{CSGMM} $\{ \rho_{k,(t+1)}, \vgam_{k,(t+1)}\}_{k=1}^K$ by maximizing the \ac{ELBO} with respect to these parameters.

After the training of the model, we estimate the sparse channel representations $\vh_{\sparse}$ from observed noisy samples $\vy$.
The \ac{MSE}-optimal \ac{CME} can be decomposed as
\begin{equation}
\begin{split}
    \vh_{\sparse,\CME} & = \mathbb{E} \left[ \vh_{\sparse}\mid\vy \right] = \mathbb{E} \left[ \mathbb{E} \left[ \vh_{\sparse}\mid \vy, k \right]\mid \vy \right] \\
    & \approx \sum_{k=1}^K \hat{p}(k\mid\vy) \hat{\vmu}^{\vh_{\sparse}\mid\vy, k}
\end{split}
\label{eq:CME}
\end{equation}
where $\hat{p}(k\mid\vy)$ and $\hat{\vmu}^{\vh_{\sparse}\mid\vy, k}$ are the estimates of the respective quantities. This sparse estimate is then cast into its matrix representation
\begin{equation}
    \hat{\mh} = \sum_{i=1}^{G} h_{i, {\sparse}}\mpsi\left( \tilde{\nu}_i,\tilde{\tau}_i \right)
\end{equation}
where $h_{i, {\sparse}}$ are the entries of $\vh_{\sparse,\CME}$.

\section{Theoretical Analysis}

Under the assumption of a zero mean and \ac{WSSUS} channel model, the statistical characteristics are preserved under conditioning on side information \cite{Boeck-statchar:2024}. However, for \ac{OTFS}, the Toeplitz structure constraint arising from the \ac{WSSUS} assumption no longer holds. Nevertheless, we show that it is possible to design a procedure in which, under suitable conditions, the \ac{CSGMM} can optimally learn arbitrarily fine approximations of the channel distribution, which allows accurate recovery of the distribution of random \ac{OTFS} channels.

Let us consider the decomposition of the channel path coefficients as $h_p = \sqrt{a_p} e^{-j\beta_p}$ for $p \in \left\{ 0,...,P-1 \right\}$, as well as the resulting channel model
\begin{equation}
    \vr = \mphi\vh = \sum_{p=0}^{P-1} \sqrt{a_p} e^{-j\beta_p}\vphi_{p}.
    \label{eq:vectorized-OTFS-model}
\end{equation}

\subsection{Discrete Model Distribution}
If the channel parameters lie on the sparse grid $\mathcal{G}$, the following theorem shows that the statistical characterization of the \ac{OTFS} model is preserved under conditioning on side information.

\begin{theorem}
    Let $\mathcal{G}$ be a \ac{DD} grid such that for all $p \in \left\{ 0,...,P-1 \right\}$, we have $\left( \nu_p, \tau_p \right) \in \mathcal{G}$, and let $\mphi_{\dict}$ be the associated sparse dictionary (cf. \eqref{eq:DD-sparse-dict}). Let $\vz$ be any side information about $\vr$. Assume that
    \vspace{-0.4cm}
    \begin{multicols}{2}
        \begin{itemize}
            \item $\forall p \neq p^\prime, \beta_p \perp \beta_{p^\prime}$
            \item $\forall p, \beta_p \perp a_p, \nu_p, \tau_p$
            \item $\forall p, \beta_p \sim \mathcal{U}([-\pi, \pi])$
            \item $\forall p, \beta_p \perp \vz$
        \end{itemize}
    \end{multicols}
    \vspace{-0.4cm}
    \noindent
    Then $\vr$ has the following moments,
    \begin{equation}
        \mathbb{E}\left[ \vr \right] = \bm{0}
        \label{eq:th-1-mean}
    \end{equation}
    \begin{equation}
        \mathbb{E}\left[ \vr\vr^{\He} \right] = \mphi_{\dict}\mdelta\mphi_{\dict}^{\He}
        \label{eq:th-1-cov}
    \end{equation}
    \begin{equation}
        \mathbb{E}\left[ \vr\mid\vz \right] = \bm{0}
        \label{eq:th-1-mean-cond}
    \end{equation}
    \begin{equation}
        \mathbb{E}\left[ \vr\vr^{\He}\mid\vz \right] = \mphi_{\dict}\mdelta_{\vz}\mphi_{\dict}^{\He}
        \label{eq:th-1-cov-cond}
    \end{equation}
    where $\mdelta$ and $\mdelta_{\vz}$ are diagonal matrices.
    \label{th:1}
\end{theorem}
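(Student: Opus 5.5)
Because every $(\nu_p,\tau_p)$ lies on $\mathcal{G}$, we first rewrite $\vr$ in dictionary form. For each $p$ let $i(p)\in\{1,\dots,G\}$ be the (random) index with $(\tilde\nu_{i(p)},\tilde\tau_{i(p)})=(\nu_p,\tau_p)$. Then $\vphi_p=\tilde{\vphi}_{i(p)}$, and $\vr=\mphi_{\dict}\vh_{\sparse}$ with
\[
[\vh_{\sparse}]_i=\sum_{p:\, i(p)=i}\sqrt{a_p}\,e^{-j\beta_p}.
\]
The plan is to prove all four identities for $\vh_{\sparse}$ and then push them through the deterministic matrix $\mphi_{\dict}$:
\[
\mathbb{E}[\vr]=\mphi_{\dict}\mathbb{E}[\vh_{\sparse}],\qquad
\mathbb{E}[\vr\vr^{\He}]=\mphi_{\dict}\mathbb{E}[\vh_{\sparse}\vh_{\sparse}^{\He}]\mphi_{\dict}^{\He},
\]
and the same with conditioning on $\vz$. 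Everything therefore reduces to showing that $\mathbb{E}[\vh_{\sparse}\mid\vz]=\bm{0}$ and that $\mathbb{E}[\vh_{\sparse}\vh_{\sparse}^{\He}\mid\vz]$ is diagonal. The unconditional statements then follow by the tower property, or by taking $\vz$ trivial.

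For the mean, we condition on $\vz$ together with $\theta=\{a_p,\nu_p,\tau_p\}_p$. The hypotheses require $\beta_p\perp\theta$ and $\beta_p\perp\vz$. We strengthen them to $\beta_p\perp(\theta,\vz)$, with the $\beta_p$ mutually independent. Under this, $\mathbb{E}[e^{-j\beta_p}\mid\theta,\vz]=\mathbb{E}[e^{-j\beta_p}]=\frac{1}{2\pi}\int_{-\pi}^{\pi}e^{-j\beta}\,d\beta=0$. Since $\sqrt{a_p}$ and $i(p)$ are $\theta$-measurable, each term of $\vr=\sum_p\sqrt{a_p}e^{-j\beta_p}\vphi_p$ has zero conditional mean. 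Hence $\mathbb{E}[\vr\mid\vz]=\bm{0}$, and this gives \eqref{eq:th-1-mean-cond} and \eqref{eq:th-1-mean}.

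For the second moment we expand
\[
\vr\vr^{\He}=\sum_{p,p'}\sqrt{a_pa_{p'}}\,e^{-j(\beta_p-\beta_{p'})}\,\vphi_p\vphi_{p'}^{\He}.
\]
For $p\neq p'$, mutual independence and uniformity of $\beta_p,\beta_{p'}$ given $(\theta,\vz)$ give $\mathbb{E}[e^{-j\beta_p}]\,\mathbb{E}[e^{j\beta_{p'}}]=0$. The cross terms therefore vanish even when $i(p)=i(p')$. The diagonal terms give $a_p\tilde{\vphi}_{i(p)}\tilde{\vphi}_{i(p)}^{\He}$, so
\[
\mathbb{E}[\vr\vr^{\He}\mid\vz]=\mphi_{\dict}\mdelta_{\vz}\mphi_{\dict}^{\He},\qquad
[\mdelta_{\vz}]_{ii}=\mathbb{E}\Big[\sum_{p:\,i(p)=i}a_p\;\Big|\;\vz\Big].
\]
This is diagonal, and taking the expectation over $\vz$ gives $\mdelta=\mathbb{E}[\mdelta_{\vz}]$.

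The main obstacle is the independence bookkeeping. The stated hypotheses are only pairwise: $\beta_p\perp\beta_{p'}$, $\beta_p\perp(a_p,\nu_p,\tau_p)$, and $\beta_p\perp\vz$. The argument above needs the stronger joint statement that $(\beta_p,\beta_{p'})$ is independent of $(\theta,\vz)$, including the other paths' $a_{p'},\nu_{p'},\tau_{p'}$. I will state this strengthened reading explicitly as the intended interpretation of the assumptions. Under it, the conditional expectations factor exactly as used, and no further difficulty remains beyond the grid-membership identification $\vphi_p=\tilde{\vphi}_{i(p)}$.
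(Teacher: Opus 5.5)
Your argument is correct and reaches the paper's diagonal structure with different bookkeeping.

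The paper writes the entries of $\vh_{\sparse}$ over grid indices as $\sqrt{\eta_i}e^{-j\tilde\beta_i}$. It asserts ``without loss of generality'' that the $\tilde\beta_i$ inherit the uniformity and independence of the path phases, including $\tilde\beta_i\perp\eta_i$ and $\tilde\beta_i\perp\vz$. It then evaluates the double sum over $i,i'$ via $\mathbb{E}[e^{-j\tilde\beta_i}e^{j\tilde\beta_{i'}}]=\delta_{i,i'}$, which gives $\mdelta=\diag(\mathbb{E}[\veta])$.

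You instead expand over paths and condition on $(\theta,\vz)$. This has two advantages:
\begin{itemize}
\item The phase hypotheses are used exactly where they apply.
\item Several paths landing on the same grid point cause no difficulty, since the cross terms vanish whether or not $i(p)=i(p')$. The paper's route needs, but does not give, an argument that the phase of such a sum is again uniform and jointly independent of the amplitudes.
\end{itemize}
Your $[\mdelta_{\vz}]_{ii}=\mathbb{E}[\sum_{p:\,i(p)=i}a_p\mid\vz]$ equals the paper's $\mathbb{E}[\eta_i\mid\vz]$ after phase averaging. The paper's version has the merit of stating the result directly as a zero-mean, diagonal-covariance law for $\vh_{\sparse}$, which is the form of the CSGMM prior it wants to match.

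Your strengthened hypothesis is genuinely needed, and the paper uses it tacitly. Take $\theta$ deterministic and $\vz=(\beta_0-\beta_1)\bmod 2\pi$. Then each $\beta_p$ is independent of $\vz$ and the two phases are independent of each other. Yet $\mathbb{E}[e^{-j(\beta_0-\beta_1)}\mid\vz]=e^{-j\vz}\neq 0$, so the cross term survives in $\mathbb{E}[\vr\vr^{\He}\mid\vz]$.

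This example also satisfies the version you state in your second paragraph: each $\beta_p\perp(\theta,\vz)$, with the phases mutually independent. So the assumption you should state is the joint one from your last paragraph, $(\beta_p,\beta_{p'})\perp(\theta,\vz)$ for all $p\neq p'$.
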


\begin{proof}
    See Appendix.
\end{proof}

According to Theorem \ref{th:1}, if certain assumptions on the side information $\vz$ and the path phases $\beta = \left\{\beta_p\right\}_{p=0}^{P-1}$ hold,
the zero-mean property and dictionary-based structure of the model are preserved.
Thus, the model can, in principle, optimally learn this distribution since the ground truth model distribution lies within the search space of the \ac{CSGMM} \cite{Boeck-statchar:2024}.

\subsection{Continuous Model Distribution}

In general, the parameters of ground truth channels are continuously distributed over a bounded range of possible values. 
By constructing a sequence of discrete models, the following theorem establishes convergence to the continuous ground truth model.

\begin{theorem}
    Let $\left(\vr_{(n)}\right)_{n \in \mathbb{N}}$ be a sequence of discrete models defined on a sequence of increasingly dense, evenly spaced \ac{DD} grids $\mathcal{G}_{(n)}$. We write $\mphi_{\dict ,(n)}$ as the sparse dictionary associated to \ac{DD} grid $\mathcal{G}_{(n)}$. Assume boundedness of the \ac{DD} parameters. Let $\vz$ be some side information on $\vr$. Furthermore, assume
    \vspace{-0.4cm}
    \begin{multicols}{2}
        \begin{itemize}
            \item $\forall p \neq p^\prime, \beta_p \perp \beta_{p^\prime}$
            \item $\forall p, \beta_p \perp a_p, \nu_p, \tau_p$
            \item $\forall p, \beta_p \sim \mathcal{U}([-\pi, \pi])$
            \item $\forall p, \beta_p \perp \vz$
        \end{itemize}
    \end{multicols}
    \vspace{-0.4cm}
    \noindent
    Then
    \begin{equation}
        \forall n \in \mathbb{N}, \mathbb{E}\left[ \vr_{(n)} \right] = \mathbb{E}\left[ \vr_{(n)}\mid\vz \right] = \bm{0}
        \label{eq:th-2-mean-seq}
    \end{equation}
    \begin{equation}
        \mathbb{E}\left[ \vr \right] = \mathbb{E}\left[ \vr\mid\vz \right] = \bm{0}
        \label{eq:th-2-mean-cont}
    \end{equation}
    \begin{equation}
        \mphi_{\dict ,(n)}\mdelta_{(n)}\mphi_{\dict ,(n)}^{\He} = \mathbb{E}\left[ \vr_{(n)}\vr_{(n)}^{\He} \right] 
        \hspace{-2pt}\underset{n \rightarrow +\infty}{\xrightarrow{\hspace{16pt}}}\hspace{-2pt}
        \mathbb{E}\left[ \vr\vr^{\He} \right]
        \label{eq:th-2-cov}
    \end{equation}
    \begin{equation}
        \mphi_{\dict ,(n)}\mdelta_{(n),\vz}\mphi_{\dict ,(n)}^{\He} = \mathbb{E}\left[ \vr_{(n)}\vr_{(n)}^{\He}\mid\vz \right] 
        \hspace{-2pt}\underset{n \rightarrow +\infty}{\xrightarrow{\hspace{16pt}}}\hspace{-2pt}
        \mathbb{E}\left[ \vr\vr^{\He}\mid\vz \right]
        \label{eq:th-2-cov-cond}
    \end{equation}
    where $\mdelta_{(n)}$ is a diagonal matrix dependent on the \acs{DD} grid $\mathcal{G}_{(n)}$, and $\mdelta_{(n),\vz}$ is a diagonal matrix that additionally depends on the side information $\vz$.
    \label{th:2}
\end{theorem}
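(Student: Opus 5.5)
We construct the discrete models by quantization. For each $n$ let $q_{(n)}:\mathbb{R}^2\to\mathcal{G}_{(n)}$ map a pair $(\nu,\tau)$ to its nearest grid point; ties are broken deterministically. Since the parameters are bounded and the grids are evenly spaced with spacing going to zero, we have $\|q_{(n)}(\nu,\tau)-(\nu,\tau)\|\le \epsilon_{(n)}\to 0$ uniformly on the bounded parameter region. Set $(\nu_{p,(n)},\tau_{p,(n)})=q_{(n)}(\nu_p,\tau_p)$ and define
\begin{equation*}
\vr_{(n)}=\sum_{p=0}^{P-1}\sqrt{a_p}e^{-j\beta_p}\,\mpsi(\nu_{p,(n)},\tau_{p,(n)})\vx .
\end{equation*}
The quantized parameters are deterministic functions of $(\nu_p,\tau_p)$. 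Hence all four independence assumptions on $\beta_p$ carry over to $(a_p,\nu_{p,(n)},\tau_{p,(n)})$. Every parameter pair of $\vr_{(n)}$ lies on $\mathcal{G}_{(n)}$, so Theorem~\ref{th:1} applies for each fixed $n$. This gives \eqref{eq:th-2-mean-seq} and the dictionary representations on the left of \eqref{eq:th-2-cov} and \eqref{eq:th-2-cov-cond}, with diagonal $\mdelta_{(n)}$ and $\mdelta_{(n),\vz}$. The diagonal entry at grid index $i$ equals $\sum_p\mathbb{E}[a_p\,\mathbb{1}\{q_{(n)}(\nu_p,\tau_p)=(\tilde\nu_i,\tilde\tau_i)\}]$, and the same expression conditioned on $\vz$ in the conditional case.

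We prove \eqref{eq:th-2-mean-cont} directly, repeating the phase argument from the proof of Theorem~\ref{th:1}. Condition on $(a_p,\nu_p,\tau_p)$, and additionally on $\vz$ in the conditional case. Since $\beta_p$ is independent of these and uniform, $\mathbb{E}[e^{-j\beta_p}\mid\cdot]=0$, so every summand has zero mean. For the second moment we expand
\begin{equation*}
\vr\vr^{\He}=\sum_{p,p'}\sqrt{a_pa_{p'}}\,e^{-j(\beta_p-\beta_{p'})}\vphi_p\vphi_{p'}^{\He} .
\end{equation*}
The cross terms with $p\ne p'$ vanish because the phases are mutually independent and uniform. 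This leaves $\mathbb{E}[\vr\vr^{\He}]=\sum_p\mathbb{E}[a_p\vphi_p\vphi_p^{\He}]$, and $\mathbb{E}[\vr_{(n)}\vr_{(n)}^{\He}]$ is the same expression with $\vphi_p$ replaced by $\vphi_{p,(n)}=\mpsi(\nu_{p,(n)},\tau_{p,(n)})\vx$. The conditional versions are the same sums conditioned on $\vz$.

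Convergence reduces to showing $\mathbb{E}[a_p(\vphi_{p,(n)}\vphi_{p,(n)}^{\He}-\vphi_p\vphi_p^{\He})]\to 0$. Pointwise convergence comes from continuity of $(\nu,\tau)\mapsto\mpsi(\nu,\tau)\vx$. The domination needed for dominated convergence, together with $\mathbb{E}[a_p]<\infty$ (which we assume, as it is needed for $\mathbb{E}[\vr\vr^{\He}]$ to exist), comes from a uniform bound on $\|\mpsi(\nu,\tau)\vx\|$ over the bounded parameter region. In the conditional case we apply conditional dominated convergence, which gives almost sure convergence in $\vz$.

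The main obstacle is continuity of $\mpsi$ as given in \eqref{eq:ISI-mat-element}. There are two issues.
\begin{itemize}
\item The Dirichlet-kernel ratios have removable singularities when $k'-k+\nu_pNT$ or $l'-l+\tau_pM\Delta f$ is an integer multiple of $N$ or $M$. At these points we extend by the limiting value $N$ (respectively $M$); after the extension the ratios are continuous and bounded by $N$ and $M$.
\item The piecewise factor depends on $\lceil M\tau_p/T\rceil$, which jumps at multiples of $T/M$, so $\mpsi$ is discontinuous in $\tau$ on a finite set of hyperplanes.
\end{itemize}
For the second issue, we first try assuming $\tau_p$ has a distribution without atoms at the points $\tau=mT/M$. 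This holds, for instance, if $\tau_p$ is continuously distributed, consistent with the setting of this subsection. The discontinuity set then has probability zero, pointwise convergence holds almost surely, and the bound $\|\mpsi\vx\|\le NM|x_{\pilot}|$ (up to constants) still gives domination. Alternatively, one can choose grids whose delay points include the lattice $\{mT/M\}$ and quantize $\tau$ downward within each cell, so that $\lceil M\tau_{p,(n)}/T\rceil=\lceil M\tau_p/T\rceil$ eventually and the jump never occurs.
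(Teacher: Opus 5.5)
Your proposal is correct and follows the same route as the paper's very terse proof. Theorem~\ref{th:1} on each grid, together with phase averaging, gives the means and the diagonal dictionary form, and dominated convergence gives the covariance limits; you also supply what the paper omits (the quantization map, integrability of $a_p$, the domination bound and the continuity analysis).

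Two small remarks. First, like the paper, you silently use the independence hypotheses in their joint form: each $\beta_p$ must be independent of all other phases, amplitudes, Doppler--delay parameters and $\vz$ together, since that is what the cross-term and conditional-mean steps actually need. Second, for the single-pilot $\vx_{\plOTFS}$ the ceiling-dependent factor in \eqref{eq:ISI-mat-element} depends only on the input index $(k',l')$. It therefore multiplies all of $\vphi_p$ by one unimodular scalar, which cancels in $\vphi_p\vphi_p^{\He}$, so your no-atom assumption (or special grid choice) is unnecessary.
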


\begin{proof}
    \eqref{eq:th-2-mean-seq} and \eqref{eq:th-2-mean-cont} stem from similar computations to the results in \eqref{eq:th-1-mean} and \eqref{eq:th-1-mean-cond}. \eqref{eq:th-2-cov} and \eqref{eq:th-2-cov-cond} are the result of the dominated convergence theorem.
\end{proof}

Theorem \ref{th:2} shows that the statistical characteristics of discretely parameterized models converge to those of the ground truth continuous channel. 
Since such discrete models lie within the learnable space of the \ac{CSGMM}, this result demonstrates that the \ac{CSGMM} can, in principle, capture arbitrarily accurate representations of realistic \ac{OTFS} channels. Therefore, it provides theoretical motivation for implementing the \ac{CSGMM} with high-resolution \ac{DD} grids, ensuring that the moments of the learned approximations approach those of the ground truth, continuously parameterized \ac{OTFS} models.
\section{Numerical Results}


\begin{figure}[t]
\includegraphics{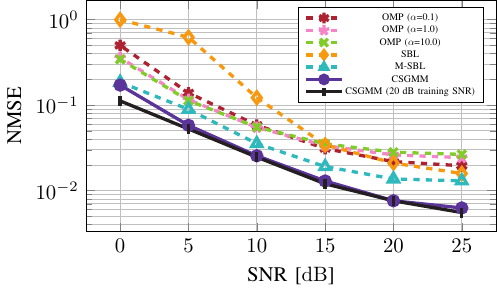}
\caption{\acs{NMSE} performance of \acs{CSGMM}, \acs{OMP}, \acs{SBL}, and \acs{M-SBL} applied to \acs{OTFS} channel estimation for different \acs{SNR} values. For trained methods, at each tested \acs{SNR} level, we separately train a new model at that \acs{SNR}.}
\label{fig:OTFS-CSGMM-vs-baselines-over-SNR}
\vspace{-0.4cm}
\end{figure}

To evaluate our proposed channel estimation framework, we consider the ``Dynamic Doppler (DD1)" scenario from DeepMIMOv3~\cite{Alkhateeb:2019} with the \ac{BS} labeled ``BS1''.
Furthermore, we use single-element antennas and generate 
the $P=10$ most significant paths for \ac{BS}-to-user downlink channels across the 2000 available scenes.
We randomly shuffle and partition the dataset into a training and a testing set with sizes $N_{\train} = 12805$ and $N_{\test} = 3202$, respectively.

The implementation of the \ac{OTFS} modulation scheme is done at a subcarrier frequency spacing of $\Delta f = \SI{30}{\kilo Hz}$ and a symbol duration $T=\frac{1}{\Delta f}= \SI{33.3}{\micro\second}$. For the \ac{CSGMM}, we consider $K=10$ components. The \ac{OTFS} modulation \ac{DD} grid $\mathcal{G}_{\data}$ is set to be an $8 \times 8$ grid, while the \ac{OTFS} sparse \ac{DD} grid $\mathcal{G}$ spans $32 \times 32$ elements, unless specified otherwise, such that we have $|\mathcal{G}_{\data}|=64$ and $|\mathcal{G}|=1024$. Except where otherwise provided, we set the \ac{SNR} at \SI{20}{dB}.

The \ac{NMSE} as our distortion metric is given in its empirical form by
\begin{equation}
\NMSE = \frac{1}{N_{\test}} \sum_{n=1}^{N_{\test}} \frac{\left\Vert \hat{\mh} - \mh \right\Vert_{\Frob}^2}{\left\Vert \mh \right\Vert_{\Frob}^2}.
\label{eq:NMSE}
\end{equation}
For comparative baselines, we consider \ac{OMP} \cite{Pachigolla:2025}, \ac{SBL} \cite{Wipf:2004} and its variant \ac{M-SBL} \cite{Wipf:2007}. In our implementation of \ac{OMP}, the tolerance level of the residue norm is defined as $\epsilon_{\OMP} = \alpha|\mathcal{G}|$, where $\alpha$ denotes a scaling factor and $|\mathcal{G}|$ the grid size.
From \cite{Wipf:2007}, we see that we can equate \ac{M-SBL} to a single-component \ac{CSGMM} model, cf. \eqref{eq:OTFS-CSGMM}.

In the first experiment, we train and test models separately for each SNR value in a specified range, retraining each learning-based model independently at every assessed SNR level. We additionally evaluate the generalization capability of our approach by testing a \ac{CSGMM} trained at a single, fixed SNR. 
The corresponding performance results are presented in Fig.~\ref{fig:OTFS-CSGMM-vs-baselines-over-SNR}. 
The \ac{CSGMM} consistently outperforms all baseline models across the entire \ac{SNR} range, showing a significant improvement over the next-best-performing method, namely \ac{M-SBL}. At high \ac{SNR} levels, all implemented methods exhibit a flattening in performance, suggesting saturation. These observations indicate that the proposed \ac{CSGMM} more accurately captures the probabilistic structure of \ac{OTFS} channels compared to existing dictionary-based approaches. The reduced improvement in distortion at high \ac{SNR} values further highlights the limitations introduced by fractional Doppler effects when using discretized \ac{DD} grid approximations. While its accuracy remains ultimately constrained by the resolution of the discrete grid representation of the channel distribution in the \ac{DD} domain, leading to a residual generalization bias, the \ac{CSGMM} demonstrates strong generalization performance, making it a promising candidate for \ac{OTFS} channel estimation.

\begin{figure}[t]
\includegraphics{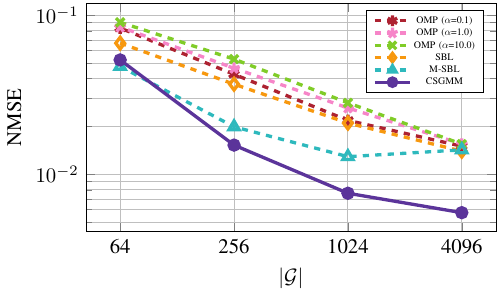}
\vspace{-0.13cm}
\caption{\acs{NMSE} performance of \acs{CSGMM}, \acs{OMP}, \acs{SBL}, and \acs{M-SBL} applied to \acs{OTFS} channel estimation for different sparse grid sizes $|\mathcal{G}|$.}
\label{fig:OTFS-CSGMM-vs-baselines-over-sdim}
\vspace{-0.4cm}
\end{figure}

To elaborate more on the impact of the discretized grid, in the second experiment, the \ac{SNR} is fixed and the performance of the considered methods is evaluated as a function of the sparse grid cardinality, $|\mathcal{G}|$. The corresponding results are shown in Fig.~\ref{fig:OTFS-CSGMM-vs-baselines-over-sdim}. For large grids, the \ac{CSGMM} consistently outperforms all baselines, achieving significant improvement for a $64 \times 64$ grid relative to the best competing method, \ac{SBL}. 
The direct estimation baselines also show improved performance as the grid size increases, but keep their relatively inferior performance. 
Furthermore, while the \ac{CSGMM} benefits monotonically from larger $|\mathcal{G}|$, the \ac{M-SBL} model attains its peak performance at $|\mathcal{G}| = 1024$ and degrades thereafter, suggesting over-regularization. 
By comparison, the \ac{CSGMM} possesses sufficient model complexity to effectively exploit increased sparse grid dimensionality, thereby consistently benefiting from larger grids. 
Overall, the results highlight the \ac{CSGMM}'s superior ability to capture realistic \ac{OTFS} channel distributions over sparse \ac{DD} grids.

\section{Conclusion}


This paper proposes a novel way to address the problem of \ac{OTFS} channel estimation through a statistically expressive \ac{ML}-based \ac{SBGM}. The used \ac{CSGMM} accurately captures the complexities of channel distributions in high-mobility wireless communication scenarios. We have demonstrated significant performance improvements over state-of-the-art baseline methods in terms of estimation accuracy.
Additionally, we have provided theoretical insights into the potential of our approach to learn arbitrarily accurate approximations of \ac{OTFS} channel distributions. The application of the \ac{CSGMM}, as well as the related model, the \ac{CSVAE}, to the problem of \ac{OTFS} channel estimation offers an exciting opportunity for future research.
\section{Appendix}


\begin{proof}[Proof of Theorem \ref{th:1}]
    Let $\vr$ be a random vector defined according to \eqref{eq:vectorized-OTFS-model}, where its parameters follow all assumptions. Let $G =|\mathcal{G}|$ be the cardinality of $\mathcal{G}$. Since the parameters lie in $\mathcal{G}$, we can write $\vr = \mpsi_{\dict}\vh_{\sparse}$, where we can decompose the entries of $\vh_{\sparse}$ as $\vh_{\sparse, i} = \sqrt{\eta_i}e^{-j\tilde{\beta}_i}$ for $i \in \left\{ 1,...,G \right\}$. Without loss of generalization, we can assume that all elements of $\tilde{\beta} = \left\{ \tilde{\beta}_i \right\}_{i=1}^G$ have the same properties as the elements from $\beta = \left\{ \beta_p \right\}_{p=0}^{P-1}$. It is then easily shown that $\forall i \in \left\{ 1,...,G \right\}, \tilde{\beta}_i \perp \eta_i$. As such, we can perform a change of variables on the moments of $\vr$ applied in \eqref{eq:th-1-mean-proof} and \eqref{eq:th-1-cov-proof}, shown at the bottom of this page, which yields \eqref{eq:th-1-mean} and \eqref{eq:th-1-cov} respectively. By additionally assuming $\forall p, \beta_p \perp \vz$, without loss of generalization, we can assume that $\forall i, \tilde{\beta}_i \perp \vz$, and through a similar reasoning, we obtain \eqref{eq:th-1-mean-cond} and \eqref{eq:th-1-cov-cond}. 
\end{proof}

\begin{figure*}[b]
\hrule
\vspace{0.2cm}
\begin{equation}
    \mathbb{E}\left[ \vr \right] = \mathbb{E}_{\veta, \tilde{\beta}} \left[ \vr \right] = \sum_{i=1}^{G}\mathbb{E}_{\veta}\left[\sqrt{\eta_i}\right]\underbrace{\mathbb{E}_{\tilde{\beta}}\left[e^{-j\tilde{\beta}_i}\right]}_{=0 \text{ since } \tilde{\beta}_i \sim \mathcal{U}([-\pi, \pi])} \vphi\left(\tilde{\nu}^{(i)}, \tilde{\tau}^{(i)} \right) = \bm{0}
\label{eq:th-1-mean-proof}
\end{equation}
\vspace{-0.2cm}
\begin{equation}
\begin{split}
    \mathbb{E}\left[ \vr\vr^{\He} \right] & = \mathbb{E}_{\veta, \tilde{\beta}} \left[ \vr\vr^{\He} \right] = \sum_{i=1}^{G}\sum_{i'=1}^{G}\mathbb{E}_{\veta}\left[\sqrt{\eta_i}\sqrt{\eta_{i'}}\right]\underbrace{\mathbb{E}_{\tilde{\beta}}\left[e^{-j\tilde{\beta}_i}e^{j\tilde{\beta}_{i'}}\right]}_{=\delta_{i,i'} \text{ since } \tilde{\beta}_i \perp \tilde{\beta}_{i'}} \vphi\left(\tilde{\nu}^{(i)}, \tilde{\tau}^{(i)}\right)\vphi^{\He}\left(\tilde{\nu}^{(i^\prime)}, \tilde{\tau}^{(i^\prime)}\right)\\
\vspace{-0.2cm}
    & = \sum_{i=1}^{G}\mathbb{E}_{\veta}\left[\eta_i\right]\vphi\left(\tilde{\nu}^{(i)}, \tilde{\tau}^{(i)}\right)\vphi^{\He}\left(\tilde{\nu}^{(i)}, \tilde{\tau}^{(i)}\right) = \mpsi_{\dict}\diag\left(\mathbb{E}\left[\veta\right]\right)\mpsi_{\dict}^{\He}
\end{split}
\label{eq:th-1-cov-proof}
\end{equation}
\end{figure*}

\balance
\bibliographystyle{IEEEtran}
  
\bibliography{IEEEabrv,mybib}

\end{document}